\theoremstyle{plain}
\newtheorem{theorem}{Theorem}
\newtheorem{lemma}[theorem]{Lemma}
\newtheorem{definition}[theorem]{Definition}
\theoremstyle{definition}
\newtheorem{example}{Example}
\newtheorem{remark}{Remark}
\begin{document}
\title{A Capacity-Achieving $T$-PIR Scheme Based On MDS  Array Codes}

\author{\IEEEauthorblockN{Jingke Xu,~ Yaqian Zhang,~ Zhifang Zhang}\\
\IEEEauthorblockA{\fontsize{9.8}{12}\selectfont KLMM, Academy of Mathematics and Systems Science, Chinese Academy of Sciences, Beijing 100190, China\\
 School of Mathematical Sciences, University of Chinese Academy of Sciences, Beijing 100049, China\\
Emails: xujingke14@mails.ucas.edu.cn, zhangyaqian15@mails.ucas.ac.cn, zfz@amss.ac.cn}
}
\maketitle
\thispagestyle{empty}

\begin{abstract}
Suppose a database containing $M$ records is replicated in each of $N$ servers, and a user wants to privately retrieve one record by accessing the servers such that identity of the retrieved record is secret against any up to $T$ servers. A scheme designed for this purpose is called a $T$-private information retrieval ($T$-PIR) scheme.

In this paper we focus on the field size of  $T$-PIR schemes.
We design a general capacity-achieving $T$-PIR scheme whose queries are generated by using some {\rm MDS } array codes. It only requires
field size $q\geq\sqrt[\ell]{N}$, where $\ell=\min\{t^{M-2},(n-t)^{M-2}\}, ~t=T/{\rm gcd}(N,T),~n=N/{\rm gcd}(N,T)$ and has the optimal sub-packetization $Nn^{M-2}$. Comparing with  existing capacity-achieving $T$-PIR schemes, our scheme has the following advantage, that is, its field size monotonically decreases as the number of records $M$ grows. In particular,  the binary field is sufficient for building a capacity-achieving T-PIR scheme as long as $M\geq 2+\lceil\log_\mu\log_2N\rceil$, where $\mu=\min\{t,n-t\}>1$.
\end{abstract}

\section{Introduction}\label{sec1}
Private information retrieval (PIR) is a canonical problem in the study of privacy issues that arise from the retrieval of
information from public databases. Typically, a PIR model involves a database containing $M$ records
stored across $N$ servers and a user who wants to privately retrieve one record by accessing the servers. Specifically, the privacy
requirement means any colluding subset containing no more than $T$ servers knows nothing about identity of the retrieved record.
Since it is closely related to cryptography \cite{BIKO12ACCC:SharePIR} and coding theory \cite{Yekhanin07PHD:LDC&PIR}, PIR has
become a central research topic in the computer science literature since it was first introduced by Chor et al.\cite{CKGS95FOCS:PIR}
in 1995.

The efficiency of PIR scheme is characterized by its {\it rate}.
Specifically, the {\it rate} of a PIR scheme is measured as the ratio between the
retrieved data size and the downloaded size, and the {\it capacity} is defined as the supremum of the rate over all PIR schemes. Recently, much work has been done on determining the capacity of PIR in various cases.
Sun and Jafar derived that the capacity for the non-colluding servers (i.e., $T=1$) is  {\small$\frac{1-1/N}{1-(1/N)^{M}}$} in \cite{Sun&Jafar16:CapacityPIR} and further proved that the capacity for the colluding servers (i.e., $T>1$) is {\small$\frac{1-T/N}{1-(T/N)^{M}}$} in \cite{Sun&Jafar16:ColludPIR}. Moreover, the latter is called $T$-PIR.  They also determined the capacity of PIR with symmetric privacy
in \cite{Sun&Jafar16:CapaSymmPIR}.
The capacity of PIR  with MDS coded non-colluding servers is determined  in
\cite{Bana&Uluk16:CapacityPIRCoded}.  It remains an open problem to determine the capacity of PIR with MDS coded colluding servers. For non-MDS coded storage,  PIR schemes with colluding or non-colluding servers are presented
in \cite{Lin&Kumar&R&Amat18:CapaPIRNonMDS,HGHTKK18:TPIRwithTranCodes}.

In general, the capacity of PIR is achieved by dividing each record into multiple sub-packets and querying from each server specially designed combinations of these sub-packets. Therefore, both the number of sub-packets and the size of each sub-packet are important metrics for measuring the implementation complexity of a PIR scheme. As to the former, we call the number of sub-packets contained in each record as sub-packetization. The optimal sub-packetization for capacity-achieving PIR schemes has been determined in some cases \cite{Sun&Jafar16:OptimalPIR,Zhang&Xu17:OptimalSubpacketization,Xu&Zhang18:OptimalSubpacketization}. As to the latter, since all existing PIR schemes are linear schemes over some finite fields, it is actually about the size of the field on which the PIR scheme can be built. The main concern of this work is to reduce the field size for $T$-PIR schemes while maintaining the rate achieving the capacity and the optimal sub-packetization.

In \cite{Sun&Jafar16:ColludPIR}, it requires a field of size $q=\Omega(N^{2}T^{M-2})$ for the capacity-achieving $T$-PIR scheme. The field size is reduced to $q=\Omega(Nt^{M-2})$ for the capacity-achieving $T$-PIR scheme with optimal sub-packetization in \cite{Zhang&Xu17:OptimalSubpacketization}, where $t=T/{\rm gcd}(N,T)$. The best known result of field size for capacity-achieving $T$-PIR scheme is $q=\Omega(N)$ in \cite{Xu&Zhang:PIRSmallfield}.
But the field size is still unfriendly  with the growth of the number of servers.

The main contribution of this work consists of designing a $T$-PIR scheme that simultaneously achieves the capacity and the optimal sub-packetization $\!Nn^{M\!-\!2}\!$ over a finite field $\mathbb{F}_q$ for all possible parameters $(N,T,M)$, and it requires the field size $q \geq\sqrt[\ell]{N}$,  where $\ell=\min\{t^{M-2},(n-t)^{M-2}\}, n=N/{{\rm gcd}(N,T)},t=T/{{\rm gcd}(N,T)}.$  When $\ell=1$, the constraint of the field size in our scheme degenerates into $q\geq N$, which is the same with  that of the capacity-achieving $T$-PIR scheme in \cite{Xu&Zhang:PIRSmallfield}. When $\ell>1$, the binary field is sufficient for building a capacity-achieving $T$-PIR scheme provided  $M\geq 2+\lceil\log_\mu\log_2N\rceil$, where $\mu=\min\{t,n-t\}$.

\begin{table}[ht]
\centering
\setlength{\abovecaptionskip}{0.2cm}
\setlength{\belowcaptionskip}{-0.1cm}
\begin{tabular}{|r|l|}
  \hline
  Reference& Field size($q$) \\\hline
   Sun et al. \cite{Sun&Jafar16:ColludPIR}&$q\geq\max\{N^2T^{M-2},N^2(N-T)^{M-2}\}$\\\hline
   Zhang et al. \cite{YZhangGGe17:CodedTPIR}&$q\geq\max\{Nnt^{M-2},Nn(n-t)^{M-2}\}$\\\hline
   Zhang et al. \cite{Zhang&Xu17:OptimalSubpacketization}&$q\geq\max\{Nt^{M-2},N(n-t)^{M-2}\}$\\\hline
  Xu et al. \cite{Xu&Zhang:PIRSmallfield} &$q\geq N$\\\hline
  This paper&$q\geq \sqrt[\ell]{N}, \ell=\min\{t^{M-2},(n-t)^{M-2}\}$\\
  \hline
\end{tabular}
\caption{A list of all existing  capacity-achieving $T$-PIR schemes with $T\geq2$.  And  $n=\frac{N}{{\rm gcd}(N,T)},t=\frac{T}{{\rm gcd}(N,T)}.$ }\label{tab0}
\end{table}

Comparing with  all existing capacity-achieving $T$-PIR schemes with $T\geq 2$ in \cite{Sun&Jafar16:ColludPIR},
\cite{YZhangGGe17:CodedTPIR},\cite{Zhang&Xu17:OptimalSubpacketization}, \cite{Xu&Zhang:PIRSmallfield}, as displayed in Table \ref{tab0},  the main difference in our scheme is to employ {\rm MDS} array codes to generate queries, which is a key idea for reducing the field size.  Moreover, an advantage of our scheme is  that  its field size monotonically decreases as the number of records $M$ grows.

The rest of this paper is organized as follows. First, the $T$-PIR model is formally introduced  and the  {\rm MDS} array code is defined in
Section \ref{sec2}. Then in Section III  an example of the $T$-PIR scheme is presented  to explain the design idea. The recovery property of {\rm MDS} array codes is proved and  the general descriptions of our scheme are given in Section IV. Finally, Section V concludes the paper.

\section{Preliminaries}\label{sec2}
\subsection{Notations and the $T$-PIR model}
For an integer $n\!\in\!\mathbb{N}$, we denote by $[n]$ the set $\{1,...,n\}$. For a vector ${\bf u}=(u_1,...,u_n)$ and a subset $\Gamma=\{i_1,...,i_m\}\subseteq [n]$, denote ${\bf u}_\Gamma=(u_{i_1},...,u_{i_m})$. Most vectors in this paper are row vectors and they are denoted by the bold lowercase letters (eg. ${\bf a,b}$). For a block matrix $A=(A^{(1)},A^{(2)},...,A^{(N)})$ and $\Gamma=\{i_1,...,i_m\}\subseteq [N]$, denote $A^\Gamma=(A^{(i_1)},...,A^{(i_m)})$.

Suppose there are $M$ records $W_1,...,W_M$ and $N$ servers ${\rm Serv}^{(1)},...,{\rm Serv}^{(N)}$, each server stores all the $M$ records. Moreover, the records are independent and each can be seen as an $L$-length vector over $\mathbb{F}_q$.  Then suppose a user wants to privately retrieve $W_\theta$ for some $\theta\!\in\![M]$. Formally, a $T$-PIR scheme consists of two phases:
\begin{itemize}
  \item {\bf Query phase.} Given $\theta\in[M]$,  the user generates the query {\small${\rm Que}(\theta,S)=(Q_\theta^{(1)},...,Q_\theta^{(N)})$}, and sends {\small$Q_\theta^{(j)}$} to {\small${\rm Serv}^{(j)}$} for $1\!\leq\! j\!\leq\!N$, where $S$ are some random resources privately chosen by the user. Note that {\small${\rm Que}(\cdot,\cdot)$} is the {\it query function} defined by the scheme.
  \item {\bf Response phase. }After receiving $Q_\theta^{(j)}$,  the ${\rm Serv}^{(j)}$ computes the answers ${\rm Ans}^{(j)}(Q_\theta^{(j)},W_{[M]})=A_\theta^{(j)}$ for $1\!\leq\!j\!\leq\!N$, and sends it back to the user, where ${\rm Ans}^{(j)}(\cdot,\cdot)$ is the {\it answer  function} defined by the scheme.
\end{itemize}
 Moreover the functions ${\rm Que}(\cdot,\cdot)$ and ${\rm Ans}^{(j)}(\cdot,\cdot), 1\!\leq\!j\!\leq\!N$  must  satisfy the following two conditions:
\begin{itemize}
\item[(1)]{\it Correctness: }The user can reconstruct $W_\theta$ after collecting all answers from the $N$ servers, i.e., {\small$H(W_\theta|A^{[N]}_\theta\!,Q^{[N]}_\theta\!,S,\theta)\!=\!0$}, where {\small$H(\cdot|\cdot)$} is the conditional entropy.
\item[(2)]{\it Privacy: } For any $\Gamma\!\subseteq\![N]$ with $|\Gamma|\!=\!T$, the serves in $\Gamma$ can't obtain any information on $\theta$ even if they collude with each other,  i.e., {\small$I(\theta;Q^{\Gamma}_\theta,A^{\Gamma}_\theta,W_{[M]})=0$}, where {\small$I(\cdot~;\cdot)$} denotes the mutual information.

\end{itemize}

Define the {\it rate} $\mathcal{R}$ of a $T$-PIR scheme by
{\small$$\mathcal{R}=\frac{H(W_\theta)}{\sum_{i=1}^{N}H(A^{(i)}_\theta)}\;,$$}that is, $\mathcal{R}$ characterizes the amount of retrieved information per unit of downloaded data. Furthermore, the {\it capacity} of $T$-PIR is defined by the largest  rate over all achievable $T$-PIR schemes, denoted by $\mathcal{C}_{\mbox{\tiny $T$-PIR}}$.
By \cite{Sun&Jafar16:ColludPIR}, it has that {\small$\mathcal{C}_{\mbox{\tiny $T$-PIR}}=\frac{1-T/N}{1-(T/N)^{M}}$}.

\subsection{MDS Array Codes}
In this section we introduce {\rm MDS}  array code  used in this paper and then give a method to construct such code over $\mathbb{F}_q$.

Suppose $N>T\geq 1$ and $N,T$ are two positive integers. For a linear $[N,T]$ code $\mathcal{C}$ over $\mathbb{F}_{q^\ell}$, a codeword ${\bf c}=(c_1,c_2,...,c_N)$ can be seen as an $N\ell$-length vector ${\bf c}=({\bf c}_1,{\bf c}_2,...,{\bf c}_N)$ over $\mathbb{F}_q$, i.e., for $i\in[N]$, the code block ${\bf c}_i=(c_{i,1},c_{i,2},...,c_{i,\ell})\in\mathbb{F}^{\ell}_{q}$ denotes the $\ell$-length vector corresponding to the symbol $c_i\in\mathbb{F}_{q^\ell}$. So we call the code $\mathcal{C}$ a linear array code over $\mathbb{F}_{q}$, and refer to the code as an $(N,T;\ell)_{q}$ linear array code. Equivalently, an $(N,T;\ell)_{q}$ linear array code can be defined by a $T\ell \times N\ell$ full rank matrix $G$ over $\mathbb{F}_q$ as follows,
$$
\mathcal{C}=\{{\bf c}=({\bf m}_1,{\bf m}_2,...,{\bf m}_T)G: ({\bf m}_1,{\bf m}_2,...,{\bf m}_T)\in \mathbb{F}^{T\ell}_{q}\}.
$$
The matrix $G$ is called a generator matrix of the array code $\mathcal{C}$. Then the  generator matrix $G$ can be viewed as a block matrix $$G=(G^{(1)},G^{(2)},...,G^{(N)}).$$
For $i\in[N]$,  the $T\ell\times \ell$ sub-matrix $G^{(i)}$ is represented as the thick column associated with the $i$th code block in the codewords of $\mathcal{C}$.
\begin{definition}\label{def2}(MDS Array Codes) A linear array code $\mathcal{C}$ over $\mathbb{F}_q$ is called an $(N,T;\ell)$ MDS array code if its generator matrix ${\small G=(G^{(1)},G^{(2)},...,G^{(N)})\in\mathbb{F}_q^{T \ell\times N\ell}}$ has the following {\rm MDS} property:
\begin{equation}\label{eqmds}
\forall~\Gamma \subseteq [N] ~\text{with}~ |\Gamma|=T, {\rm rank}(G^{\Gamma})=T\ell
\end{equation}
where $G^{(i)}\in\mathbb{F}_q^{T \ell\times \ell}$ for $i\in[N]$ and $T<N$. \end{definition}
By the definition of $(N,T;\ell)$ {\rm MDS} array code $\mathcal{C}$, it degenerates into a {\rm MDS} code over $\mathbb{F}_q$ for $\ell=1$. Next we give a method to construct an $(N,T;\ell)$ MDS array code.

Suppose $\alpha$ is a primitive element of $\mathbb{F}_{q^\ell}$, then $\mathbb{F}_{q^\ell}=\{\alpha^j:0\leq j \leq q^\ell-2\}\cup \{0\}$. Suppose $m(x)$ is the minimal polynomial of $\alpha$ over $\mathbb{F}_q$. Let $C\in \mathbb{F}_q^{\ell\times\ell}$ be the companion matrix of $m(x)$ and $\mathbb{F}=\{C^j:j\in\mathbb{Z}\}\cup\{\bf 0\}$. Then $\mathbb{F}$ is a finite field of size $q^\ell$  and  the map which is defined by  $\varphi(\alpha^j)=C^j$ and $\varphi(0)={\bf 0}$ is a field isomorphism from $\mathbb{F}_{q^\ell}$ to $\mathbb{F}$  by \cite{Finitefield}.
Let {\small$G=(\alpha_{i,j})\in \mathbb{F}_{q^{\ell}}^{T\times N}$} be a generator matrix of an $[N,T]$ MDS code over
 $\mathbb{F}_{q^\ell}$. Note that each symbol of $\mathbb{F}_{q^\ell}$ can be represented as an
 $\ell\times\ell$ matrix in $\mathbb{F}$ over $\mathbb{F}_q$ by using the field isomorphism $\varphi(\cdot)$, then the matrix $G$ can be seen as an $T\times N$ block matrix $G'$, i.e., $G^\prime=(\varphi(\alpha_{i,j}))_{i\in[T],j\in[N]}$, and each thick column is an $T\ell \times \ell$ matrix over
 $\mathbb{F}_q$. It is easy to verify that for any $\Gamma\subseteq[N]$ with $|\Gamma|=T$,
 $$\det ((\varphi(\alpha_{i,j}))_{i\in[T],j\in\Gamma})=\varphi(\det((\alpha_{i,j})_{i\in[T],j\in\Gamma}))\neq 0.$$
 Hence the linear array code which is defined by the generator matrix $G'$ over
 $\mathbb{F}_q$ is an $(N,T;\ell)$ MDS array code. Then we can directly obtain the following theorem.
 \begin{theorem}\label{thm0} Suppose $q$ is a power of a prime and $N,T,\ell \in \mathbb{N}$ with $N>T\geq 2$.
 If $q^\ell\geq N$, then there exists an $(N,T;\ell)$ {\rm MDS} array code over $\mathbb{F}_q$.
 \end{theorem}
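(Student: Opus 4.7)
The plan is to build the $(N,T;\ell)$ MDS array code by ``lifting'' a standard MDS code over the extension field $\mathbb{F}_{q^\ell}$ down to $\mathbb{F}_q$ through the matrix representation of field elements, exactly along the lines of the construction sketched just before the theorem. Since $q^\ell \geq N$, a classical Reed--Solomon argument guarantees an $[N,T]$ MDS code over $\mathbb{F}_{q^\ell}$; let $G = (\alpha_{i,j}) \in \mathbb{F}_{q^\ell}^{T\times N}$ be one of its generator matrices. I would then fix a primitive element $\alpha$ of $\mathbb{F}_{q^\ell}$, form the companion matrix $C$ of its minimal polynomial over $\mathbb{F}_q$, and use the isomorphism $\varphi:\mathbb{F}_{q^\ell}\to \mathbb{F}=\{C^j\}\cup\{\mathbf 0\}$ to replace each entry of $G$ by an $\ell\times \ell$ matrix over $\mathbb{F}_q$. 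This produces the block matrix $G' \in \mathbb{F}_q^{T\ell\times N\ell}$ that I propose as the generator of the target array code.

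The core step is to verify the MDS property $\mathrm{rank}(G'^{\Gamma}) = T\ell$ for every $\Gamma \subseteq [N]$ with $|\Gamma|=T$. I would argue this by showing that the $T\ell \times T\ell$ block-square submatrix $G'^{\Gamma}$ has nonzero determinant in $\mathbb{F}_q$. Because $\varphi$ is a ring isomorphism onto the commutative subring $\mathbb{F}\subset \mathbb{F}_q^{\ell\times\ell}$, the Leibniz expansion of the determinant commutes with $\varphi$ at the block level, giving
\[
\det\!\bigl((\varphi(\alpha_{i,j}))_{i\in[T],\, j\in \Gamma}\bigr) \;=\; \varphi\!\bigl(\det((\alpha_{i,j})_{i\in[T],\, j\in\Gamma})\bigr).
\]
The right-hand side is nonzero because $G$ comes from an MDS code, so $\varphi$ sends it to an invertible matrix in $\mathbb{F}$; this is the block determinant of $G'^{\Gamma}$ relative to the block partition, and it must equal (up to a nonzero scalar/sign that I would spell out) the ordinary determinant of $G'^{\Gamma}$ over $\mathbb{F}_q$, hence nonzero. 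The MDS property (\ref{eqmds}) follows directly.

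The main technical point that needs care is the transition from the ``block determinant with entries in the commutative subring $\mathbb{F}$'' to the ``ordinary determinant with entries in $\mathbb{F}_q$''. Since the $\ell\times \ell$ blocks pairwise commute (they all lie in the field $\mathbb{F}$), one may legitimately apply the standard identity that the determinant of such a block matrix equals the determinant of the ``scalar'' determinant evaluated in the commutative subring; this is the only place the commutativity of $\mathbb{F}$ (not merely of $\mathbb{F}_q^{\ell\times\ell}$) enters, and it is essential. Once that identity is invoked, nonvanishing of the $\mathbb{F}_{q^\ell}$-determinant transports to nonvanishing of the $\mathbb{F}_q$-determinant via the injectivity of $\varphi$, and the theorem follows immediately.
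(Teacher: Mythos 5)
Your proposal is correct and follows essentially the same route as the paper: lift a Reed--Solomon generator matrix over $\mathbb{F}_{q^\ell}$ to a $T\ell\times N\ell$ block matrix over $\mathbb{F}_q$ via the companion-matrix isomorphism $\varphi$, and verify the MDS property through the identity $\det\bigl((\varphi(\alpha_{i,j}))_{i\in[T],j\in\Gamma}\bigr)=\varphi\bigl(\det((\alpha_{i,j})_{i\in[T],j\in\Gamma})\bigr)\neq 0$. You in fact treat more carefully than the paper the one delicate step --- passing from the block determinant valued in the commutative subring $\mathbb{F}$ to the ordinary determinant over $\mathbb{F}_q$ via the commuting-blocks identity --- which the paper dismisses as ``easy to verify.''
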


Recall that for  existing capacity-achieving $T$-PIR schemes in \cite{Sun&Jafar16:ColludPIR},
\cite{YZhangGGe17:CodedTPIR},\cite{Zhang&Xu17:OptimalSubpacketization}, some $[N\ell_k,T\ell_k]$ {\rm MDS} codes over $\mathbb{F}_q$ are used to construct the query. And all $N\ell_k$ symbols of each codeword are equally divided into $N$ blocks. Then this {\rm MDS} code can be seen as an $(N,T;\ell_k)$ MDS array code over $\mathbb{F}_q$. Based on this observation, we find a direction to reduce the field size. That is, we generate the query  by using some {\rm MDS} array codes over a smaller finite field rather than {\rm MDS} codes. Moreover, the {\rm MDS} array codes need to satisfy some special property that is determined by the correctness condition. To formally illustrate this idea, we will give an example in the next section.

\section{Example For $N=5,T=3, M=3$}\label{sec3}
Before constructing our schemes, we first give an example by using the method described in \cite{Zhang&Xu17:OptimalSubpacketization}.  And then, we explain how to reduce the field size by modifying this scheme.

\begin{example}\label{eg1} Suppose $M=3$, $N=5$ and $T=3$. The field size $q \geq 15$ is enough and the sub-packetization of this case is $L=N^{M-1}=25$, so each record can be seen as a $25$-dimensional vector over $\mathbb{F}_{q}$, i.e., $W_1,W_2,W_3\in\mathbb{F}_{q}^{25}$. WLOG, suppose $W_1$ is the desired record, i.e., $\theta=1$.

Let $S_1,S_2,S_3$ be three matrices chosen by the user independently and uniformly from all $25\times25$ invertible matrices over $\mathbb{F}_q$. Actually, $S_1,S_2$ and $S_3$ are the random resources privately held by the user. Then, define
   \begin{equation}\label{eqb}
   \begin{split}
   &(a_1,a_2,...,a_{25})=W_1S_1 \\
   &(b_1,b_2,...,b_{25})=W_2S_2[:,(1:15)]G \\
   &(c_1,c_2,...,c_{25})=W_3S_3[:,(1:15)]G
  \end{split}
\end{equation}
 where $S_i[:,(1:15)]$ denotes the $25\times 15$ matrix formed by the first $15$ columns of $S_i$ and  {\small$G={\footnotesize\begin{pmatrix}
G_1&0\\0&G_2
\end{pmatrix}}$.} Moreover, {\small$G_1\in\mathbb{F}_q^{9\times 15}$} is a generator matrix of an {\small$[15,9]$ {\rm MDS}} code and {\small$G_2\in\mathbb{F}_q^{6\times 10}$} is  a generator matrix of an {\small$[10,6]$ {\rm MDS}} code over $\mathbb{F}_q$.

 It can be seen that the answers are all sums of the symbols $a_{i},b_{i},c_{i}$ in Fig.\ref{fg1}. For each sum $x$ in Fig.\ref{fg1}, we define its support as a subset of $[3]$ and this subset is composed of the label of all terms in the sum $x$, denoted by ${\rm supp}(x)$. For example, ${\rm supp}(a_i)=\{1\},{\rm supp}(b_i+c_j)=\{2,3\}$. For any $\Lambda\subseteq[3]$, a sum $x$ in Fig.\ref{fg1} is called  an $\Lambda$-type sum if $\Lambda={\rm supp}(x)$. For $\Lambda\subseteq [M]-\{\theta\}$, define $\underline{\Lambda}=\Lambda\cup\{\theta\}$ and call $\Lambda$-type sums as interference. Let $\gamma^{(i)}_k$ be the number of $\Lambda$-type sums in ${\rm Serv}^{(i)}$ for each $k$-subset $\Lambda\subseteq[3]$ and $i\in[5]$.

 \begin{figure}[ht]
\centering
\setlength{\abovecaptionskip}{0cm}
\setlength{\belowcaptionskip}{-0.2cm}
\begin{tikzpicture}[scale=1]
\draw(-4.5,2.6)--(-4.5,-2.55);
\draw(-2.85,2.6)--(-2.85,-2.55);
\draw(-1.25,2.6)--(-1.25,-2.55);
\draw(0.35,2.6)--(0.35,-2.55);
\draw(2.35,2.6)--(2.35,-2.55);
\draw(4.3,2.6)--(4.3,-2.55);
\draw[dotted](-4.45,1.65)--(4.25,1.65);
\draw[dotted](-4.45,0.95)--(4.25,0.95);
\draw[dotted](-4.45,-0.45)--(4.25,-0.45);
\draw[dotted](-4.45,-1.25)--(4.25,-1.25);
\node at (-0.1,0){\footnotesize\begin{tabular}{ccccc}
\specialrule{0.09em}{0pt}{1.5pt}
$\rm{Serv}^{(1)}$ & $\rm{Serv}^{(2)}$ & $\rm{Serv}^{(3)}$& $\rm{Serv}^{(4)}$& $\rm{Serv}^{(5)}$\\\specialrule{0.09em}{1pt}{1.5pt}
$a_1$&$a_4$&$a_7$& $a_{10},a_{11},a_{12}$&$a_{13},a_{14},a_{15}$\\ \specialrule{0em}{1.6pt}{1.6pt}
$b_1$&$b_4$&$b_7$& $b_{10},b_{11},b_{12}$&$b_{13},b_{14},b_{15}$\\
$c_1$&$c_4$&$c_7$& $c_{10},c_{11},c_{12}$&$c_{13},c_{14},c_{15}$\\ \specialrule{0em}{1.6pt}{1.6pt}
$a_{2}+b_{2}$&$a_{5}+b_{5}$&$a_{8}+b_{8}$&&\\
$a_{3}+b_{3}$&$a_{6}+b_{6}$&$a_{9}+b_{9}$&&\\
$a_{16}+c_{2}$&$a_{18}+c_{5}$&$a_{20}+c_{14}$&&\\
$a_{17}+c_{3}$&$a_{19}+c_{6}$&$a_{21}+c_{15}$&&\\\specialrule{0em}{1.6pt}{1.6pt}
$b_{16}+c_{16}$&$b_{18}+c_{18}$&$b_{20}+c_{20}$&&\\
$b_{17}+c_{17}$&$b_{19}+c_{19}$&$b_{21}+c_{21}$&&\\\specialrule{0em}{1.6pt}{1.6pt}
&&&\makecell{$a_{22}+b_{22}$\\$+c_{22}$\\$a_{23}+b_{23}$\\$+c_{23}$}&\makecell{$a_{24}+b_{24}$\\$+c_{24}$\\$a_{25}+b_{25}$\\$+c_{25}$}\\

\specialrule{0.09em}{1pt}{1.5pt}
\end{tabular}};
\end{tikzpicture}
\caption{Answers of the $(N=5,T=3,M=3)$ PIR scheme for retrieving $W_1$.}
\label{fg1}
\end{figure}
Now we show the scheme satisfies the correctness condition and the privacy condition.
Recall the sufficient conditions for the correctness (s1) requirement in \cite{Zhang&Xu17:OptimalSubpacketization}, that is, for any $\Lambda\subseteq [M]-\{\theta\}$, the interference parts of all $\underline{\Lambda}$-type sums can be linearly expressed by the $\Lambda$-type sums which appears in all servers. For $\Lambda\subseteq[M]-\{\theta\}$, we collect all $\Lambda$-type sums and the interference parts of all $\underline{\Lambda}$-type sums to form a matrix and call this matrix as the distribution matrix of $\Lambda$-type sums.  For example, for $\{2\}$-type sums $\{b_i\}$, its distribution matrix has the following form,
\begin{equation}\label{eq2}
\begin{pmatrix}{\bf b_1}&{\bf b_4}&{\bf b_7}&{\bf b_{10}}&{\bf b_{13}}\\
b_{2}&b_{5}&b_{8}&{\bf b_{11}}&{\bf b_{14}}\\b_{3}&b_{6}&b_{9}&{\bf b_{12}}&{\bf b_{15}}
\end{pmatrix}.
\end{equation}
where the bold symbols are all $\{2\}$-type sums and the rest are the interference parts of all $\{2,3\}$-type sums.
Similarly, the distribution matrix of $\{2,3\}$-type sums $\{b_i+c_j\}$ is
\begin{equation}\label{eq3}
{\footnotesize\begin{pmatrix}
{\bf b_{16}+c_{16}}&{\bf b_{18}+c_{18}}&{\bf b_{20}+c_{20}}&b_{22}+c_{22}&b_{24}+c_{24}\\
{\bf b_{17}+c_{17}}&{\bf b_{19}+c_{19}}&{\bf b_{21}+c_{21}}&b_{23}+c_{23}&b_{25}+c_{25}
\end{pmatrix}.}
\end{equation}
Then by the {\rm MDS} property of $G_1$ and $G_2$, the coordinates labeled by the bold symbols in (\ref{eq2}) and (\ref{eq3}) form an information set of $G_1$ and $G_2$, respectively. That is, the rest symbols can be recovered by the bold symbols in (\ref{eq2}) and (\ref{eq3}), respectively.  Note that for any $k$-subset $\Lambda\subseteq [M]-\{\theta\}$, the distribution matrix of $\Lambda$-type sums in Fig.\ref{fg1} has a similar form, as the matrix (\ref{eq2}) or (\ref{eq3}), so the interference parts of all $\underline{\Lambda}$-type sums  can be recovered by all $\Lambda$-type sums appeared in $5$ columns. Hence this scheme satisfies the condition (s1) in \cite{Zhang&Xu17:OptimalSubpacketization}, i.e, the correctness condition is guaranteed.

As to the privacy, recall the sufficient conditions for the privacy (s2) requirement in \cite{Zhang&Xu17:OptimalSubpacketization}, it is sufficient to ensure that for any $\Lambda\subseteq [M]-\{\theta\}$, there are the same number of independent symbols contained in any $3$ columns of $\Lambda$-type sums' distribution matrix (i.e., $(\ref{eq2})$ or $(\ref{eq3})$).  Actually this is guaranteed by the {\rm MDS} property of the linear code which is used to generate such type interference. Thus the privacy condition is guaranteed.

 Moreover the desired record consists of $25$ symbols while the answers totally contain $49$ symbols, so the scheme has rate $\frac{25}{49}$ attaining the
capacity for this case.
\end{example}
The field size relies on the maximum length of the {\rm MDS} codes used in this scheme, so it requires $q\geq 15$ in Example \ref{eg1}. Note that for any $\Lambda$-type interference, if its distribution matrix is a codeword of some {\rm MDS} array code, then there are also the same number of independent symbols contained in any $3$ columns of its distribution matrix. For example, suppose $\Lambda=\{2\}$,  the matrix (\ref{eq2}) can be viewed as a codeword of an $(5,3;3)$ {\rm MDS} array code. Similarly, the $[10,6]$ {\rm MDS} code also can be viewed as an $(5,3;2)$ {\rm MDS} array code. So if we adopt  $(5,3;3)$ and $(5,3;2)$ {\rm MDS} array codes rather than $[15,9]$ and $[10,6]$ {\rm MDS} codes, then the new obtained scheme also satisfies the $T$-privacy condition.

 However, there is a problem that how to guarantee the correctness condition. So the MDS array codes have to satisfy some property determined by the correctness condition. More precisely, for any $k$-subset $\Lambda$ of $[M]-\{\theta\}$ and {\rm MDS} array code corresponding to the $\Lambda$-type sums, denoted by $(5,3;\ell_k)$, its generator matrix needs to have the following recovery property:
 \begin{itemize}
  \item [(a1)]\emph{for $i$th thick column, there are $\gamma^{(i)}_k$ columns which are used to generate the $\Lambda$-type sums, and the rest $\gamma^{(i)}_{k+1}$ columns are used to generate  $\underline{\Lambda}$-type sums, that is, $\ell_k=\gamma^{(i)}_k+\gamma^{(i)}_{k+1}$.}
  \item [(a2)]\emph{ All these $\sum_{i=1}^N\gamma^{(i)}_k$ columns have full column rank, that is, $\sum_{i=1}^N\gamma^{(i)}_k=T\ell_k$. }
\end{itemize}
Now we give two admissible matrices $\tilde{G}_1\in\mathbb{F}_2^{9\times 15},\tilde{G}_2\in\mathbb{F}_2^{6\times 10}$.  That is,
{\footnotesize
\begin{align}
\tilde{G}_1&=\begin{pmatrix}
{0}&1&0&{0}&1&0&{1}&1&0&{1}&{0}&{0}&{1}&{0}&{0}\\
{1}&0&1&{1}&0&0&{1}&0&1&{0}&{1}&{0}&{0}&{1}&{0}\\
{1}&0&0&{0}&0&1&{1}&1&0&{0}&{0}&{1}&{0}&{0}&{1}\\
{0}&1&0&{0}&0&1&{0}&1&1&{1}&{1}&{1}&{1}&{1}&{0}\\
{1}&0&1&{0}&1&0&{1}&1&1&{0}&{1}&{1}&{1}&{1}&{1}\\
{1}&0&0&{1}&0&1&{1}&0&1&{1}&{1}&{0}&{1}&{0}&{1}\\
{0}&1&0&{1}&0&1&{0}&1&0&{0}&{1}&{0}&{0}&{0}&{1}\\
{1}&0&1&{0}&0&1&{1}&0&1&{1}&{0}&{1}&{1}&{0}&{0}\\
{1}&0&0&{1}&1&1&{0}&0&1&{1}&{0}&{0}&{0}&{1}&{1}
\end{pmatrix},\notag\\
\tilde{G}_2&=\begin{pmatrix}
{1}&{0}&{1}&{0}&{1}&{0}&1&0&0&0\\
{0}&{1}&{0}&{1}&{0}&{1}&0&1&0&0\\
{1}&{0}&{0}&{1}&{1}&{1}&0&0&0&0\\
{0}&{1}&{1}&{1}&{1}&{0}&0&0&0&0\\
{1}&{0}&{1}&{1}&{0}&{1}&0&0&1&0\\
{0}&{1}&{1}&{0}&{1}&{1}&0&0&0&1 \notag
\end{pmatrix}.
\end{align}}Then one can verify that the columns labeled by $\{1,4,7,10,11,12,13,14,15\}$ in $\tilde{G}_1$  have  full column rank and the columns labeled by $\{1,2,3,4,5,6\}$ in $\tilde{G}_2$  also have full column rank. Hence $\tilde{G}_1$ and $\tilde{G}_2$ satisfy the recovery property. Actually, $\tilde{G}_1$ is obtained by applying the method in Theorem \ref{thm0} to a generator matrix of an $[5,3]$ {\rm Generalized  Reed-Solomon} code  over $\mathbb{F}_{2^3}$ and rearranging the order of columns in each thick column by multiplying some $3\times 3$ permutation matrix. Similarly, $\tilde{G}_2$ is obtained by using the same method  to a generator matrix of a $[5,3]$ doubly-extended {\rm Generalized  Reed-Solomon} code over $\mathbb{F}_{2^2}$. Then the {\rm MDS} property of $\tilde{G}_1$ and $\tilde{G}_2$ is also satisfied.  Therefore, the new scheme obtained by using $\tilde{G}$ to replace $G$ in (\ref{eqb}) is a capacity-achieving $T$-PIR scheme with optimal sub-packetization, where {\small$\tilde{G}={\footnotesize\begin{pmatrix}
\tilde{G}_1&0\\0&\tilde{G}_2
\end{pmatrix}}$}.  Note that the field size is reduced to $2$.

As displayed in the example, the main design idea behind our scheme is to make each $(N,T;\ell_k)$ {\rm MDS} array code corresponding to $\Lambda$-type interference in the scheme satisfy the recovery property for any $k$-subset $\Lambda$ of $[M]-\{\theta\}$ and some fixed $\gamma^{(1)}_k,...,\gamma^{(N)}_k$. Fortunately, we prove that every {\rm MDS} array code trivially satisfies the recovery property by Lemma \ref{lem3} in Section \ref{sec4a}.

\section{ The General $T$-PIR Scheme Based On {\rm MDS} Array Codes}\label{sec4}
In this section we first characterize the recovery property of  MDS array codes  and  then describe our general capacity-achieving  $T$-PIR scheme based on {\rm MDS} array codes.

\subsection{The Recovery property of MDS array code}\label{sec4a}

\begin{lemma}\label{lem3} Suppose {\small$G=(G^{(1)},G^{(2)},...,G^{(N)})\in\mathbb{F}_q^{T\ell\times N\ell}$} is a generator matrix of an $(N,T;\ell)$ MDS array code over $\mathbb{F}_q$, where {\small$G^{(i)}=({\bf g}^{(i)}_{1},...,{\bf g}^{(i)}_{\ell})\in \mathbb{F}_q^{T\ell\times \ell}$} and ${\bf g}^{(i)}_{j}$ is a $T\ell$-length column vector. Then for any $(m_1,...,m_N)\in\{0,1,...,\ell\}^{N}$ with $\sum_{i=1}^Nm_i=T\ell$, there exist $N$ subsets $\Gamma_1,...,\Gamma_N$ of $[\ell]$ with $|\Gamma_i|=m_i$ such that
\begin{align*}
{\rm rank}(G_{\Gamma_1}^{(1)},G_{\Gamma_2}^{(2)},...,G_{\Gamma_N}^{(N)})=T\ell.
\end{align*}
\end{lemma}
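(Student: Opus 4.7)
The plan is to recast Lemma~\ref{lem3} as a matroid-partition problem. View the columns of $G$ as the ground set of a linear matroid $M$ on $\mathbb{F}_q^{T\ell}$, partitioned into the $N$ blocks $B_i = \{{\bf g}^{(i)}_{1},\ldots,{\bf g}^{(i)}_{\ell}\}$. The desired conclusion is precisely that, for the given profile $(m_1,\ldots,m_N)$, one can pick $m_i$ elements from each $B_i$ so that the resulting $T\ell$ columns are independent in $M$. The Edmonds/Rado matroid partition theorem tells us that such a choice exists if and only if the Hall-type inequality
\[
\sum_{i \in S} m_i \;\leq\; r\Bigl(\bigcup_{i\in S} B_i\Bigr) \qquad \text{for every } S \subseteq [N]
\]
holds, where $r$ is the rank function of $M$. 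So the entire argument reduces to verifying this inequality.

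First I would pin down the rank function using the MDS hypothesis~(\ref{eqmds}). The claim is $r(\bigcup_{i\in S} B_i) = \min\{|S|\ell,\, T\ell\}$. When $|S| \geq T$ this is immediate by applying~(\ref{eqmds}) to any $T$-subset of $S$. When $|S| < T$, I would extend $S$ to a $T$-subset $S' \supseteq S$; by~(\ref{eqmds}) we have $r(B_{S'}) = T\ell$, while the added $T - |S|$ blocks contribute at most $(T-|S|)\ell$ extra columns, so by the trivial bound $r(A\cup B) \leq r(A) + |B|$ we get $r(\bigcup_{i\in S} B_i) \geq T\ell - (T-|S|)\ell = |S|\ell$ (the reverse bound is obvious).

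With the rank formula in hand, the Hall-type inequality is routine: for any $S \subseteq [N]$, $\sum_{i\in S} m_i \leq |S|\ell$ since every $m_i \leq \ell$, and $\sum_{i\in S} m_i \leq \sum_{i\in [N]} m_i = T\ell$. Together these give $\sum_{i\in S} m_i \leq \min\{|S|\ell, T\ell\} = r(\bigcup_{i\in S} B_i)$, as required. Invoking the matroid partition theorem then closes the proof.

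Should the paper prefer a self-contained argument, the same conclusion follows by induction on $N$: the base case $N = T$ forces $m_i = \ell$ for every $i$ and is handled directly by~(\ref{eqmds}); if some $m_i = 0$ we drop block $i$ (what remains is again an $(N-1,T;\ell)$ MDS array code) and apply the inductive hypothesis. The main obstacle is the remaining case in which every $m_i \geq 1$; here no block can simply be discarded, and one has to carry out the standard augmenting/exchange step underlying Rado's theorem, using exactly the rank equalities established above to show that a partial selection can always be extended without being blocked. Thus the real content of the lemma lies in the rank formula for unions of blocks, and the rest is packaging.
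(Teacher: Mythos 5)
Your proposal is correct, but it takes a genuinely different route from the paper. You reduce the lemma to Rado's theorem on independent transversals (the matroid generalization of Hall's theorem): the whole content then sits in the rank formula $r\bigl(\bigcup_{i\in S}B_i\bigr)=\min\{|S|\ell,\,T\ell\}$, which you derive correctly from the MDS property (the case $|S|<T$ via the subadditivity bound $r(A\cup B)\leq r(A)+|B|$ applied to an extension of $S$ to a $T$-set), after which the Hall-type inequality $\sum_{i\in S}m_i\leq\min\{|S|\ell,T\ell\}$ is immediate from $m_i\leq\ell$ and $\sum_i m_i=T\ell$. The paper instead gives a self-contained proof by contradiction: it takes a rank-maximizing choice of the $\Gamma_i$, lets ${\bf V}$ be the column span, and runs an explicit exchange/augmentation induction producing $T$ disjoint nonempty subsets $X_1,\dots,X_T$ of $[N]$ all of whose blocks lie entirely in ${\bf V}$; since $|\bigcup_i X_i|\geq T$, the MDS property forces $T\ell\leq\dim{\bf V}<T\ell$, a contradiction. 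The paper's argument is essentially a hand-rolled instance of the augmenting step that proves Rado's theorem, which you correctly identify in your closing remark as the ``real work'' hidden behind the citation. Your version buys brevity and modularity at the cost of invoking an external combinatorial theorem (which you should cite precisely, e.g.\ Rado 1942 or Oxley's treatment of independent transversals with multiplicities); the paper's buys self-containment at the cost of length. Both the rank computation and the application of the transversal theorem in your write-up check out, including the observation that an independent set of size $T\ell$ in this linear matroid is exactly a full-rank selection.
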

\begin{proof} For any fixed $(m_1,...,m_N)$ with $\sum_{i=1}^Nm_i=T\ell$, there exist at least $T$ nonzero numbers of them. Without loss of generality, we may assume that $m_i\neq 0, 1\leq i\leq N$. Because that if $m_i=0$, the new matrix obtained by deleting the thick block column $G^{(i)}$ is also a generator matrix of an $(N-1,T;\ell)$ MDS array code.

Let
{\small\begin{align*} b(m_1,m_2,...,m_N)=\max_{\substack{\Gamma_1,...,\Gamma_N\subseteq[\ell],\\|\Gamma_i|=m_i,1\leq i \leq N.}}{\rm rank}(G_{\Gamma_1}^{(1)},G_{\Gamma_2}^{(2)},...,G_{\Gamma_N}^{(N)}).
\end{align*}}Then there exist $N$ subsets $\Gamma_1,...,\Gamma_N$ of $[\ell]$ with $|\Gamma_i|=m_i$ such that {\small$b={\rm rank}(G_{\Gamma_1}^{(1)},G_{\Gamma_2}^{(2)},...,G_{\Gamma_N}^{(N)})$}. Choose a maximum linearly independent subset of the vectors $\{G_{\Gamma_1}^{(1)},G_{\Gamma_2}^{(2)},...,G_{\Gamma_N}^{(N)}\}$, denoted by $\{G_{\Gamma^{(1)}_i}^{(i)}:i\in[N]\}$, then $b={\rm rank}(G_{\Gamma^{(1)}_1}^{(1)},G_{\Gamma^{(1)}_2}^{(2)},...,G_{\Gamma^{(1)}_N}^{(N)}),$ where $\Gamma^{(1)}_i\subseteq\Gamma_i$ for $i\in[N]$ and $\sum^N_{i=1}|\Gamma^{(1)}_i|=b$.  Then it is sufficient to show that $b=T\ell$.

On the contrary, we assume that $b<T\ell$. Let  ${\bf V}=\textsl{Cspan}(G_{\Gamma_1}^{(1)},G_{\Gamma_2}^{(2)},...,G_{\Gamma_N}^{(N)})$, where $\textsl{Cspan}(\cdot)$ denotes the linear space spanned by all columns of the matrix over $\mathbb{F}_q$. Then $\dim {\bf V}=b<T\ell$ and $\{G_{\Gamma^{(1)}_i}^{(i)}:i\in[N]\}$ is a base of the vector space ${\bf V}$. To derive a contradiction, we assume the following  claim has been proved.

{\bf Claim :}  \emph{if $b<T\ell$, then for $1\leq f\leq T$, there exist $f$ disjoint nonempty subsets $X_1,...,X_f$ of $[N]$ such that\\
 $\forall u\in \bigcup_{i=1}^fX_i, \forall j\in[\ell], {\bf g}^{(u)}_{j}\in {\bf V}.
 $}

 Particularly, let $f=T$.  Then it follows from the {\bf Claim} that there exist $T$ disjoint nonempty subsets
 $X_1,...,X_T$ of $[N]$ such that
 $\forall u\in \bigcup_{i=1}^TX_i, \forall j\in[\ell], {\bf g}^{(u)}_{j}\in {\bf V}.$
 Hence
  ${\rm rank}(G^{(u)}:u\in \bigcup_{i=1}^TX_i)\leq \dim {\bf V}=b.$
  On the other hand, note that $|\bigcup_{i=1}^TX_i|=\sum_{i=1}^T|X_i|\geq T$.  Combining with the MDS property of $G$, then ${\rm rank}(G^{(u)}:u\in \bigcup_{i=1}^TX_i)=T\ell$. So one can obtain that $T\ell\leq b< T\ell$, a contradiction.

   To complete the proof, it remains to prove the {\bf Claim}. Now we prove it by induction on $f$.

  For $f=1$, let $X_1=\{i\in[N]: \Gamma^{(1)}_i\neq \Gamma_i\}$. Since $b<T\ell$, then $|X_1|\geq 1$. For any $ u\in X_1$, it is sufficient to show that for $j\in [\ell]-\Gamma_u$, ${\bf g}^{(u)}_{j}\in V$. Then choosing a $m_u$-subset $\Gamma_u^\prime$ of $[\ell]$ such that $\Gamma_u^{(1)}\cup\{j\}\subseteq \Gamma_u^\prime$, one can obtain that
  {\small $${\bf V}\subseteq \textsl{Cspan}(G_{\Gamma_1}^{(1)},...,G_{\Gamma_{u-1}}^{(u-1)},G^{(u)}_{\Gamma_u^\prime},G_{\Gamma_{u+1}}^{(u+1)}...,G_{\Gamma_N}^{(N)}). $$}By the definition of $b(m_1,m_2,...,m_N)$, it holds that
  \begin{align*}{\rm rank}(G_{\Gamma_1}^{(1)},...,G_{\Gamma_{u-1}}^{(u-1)},G^{(u)}_{\Gamma_u^\prime},G_{\Gamma_{u+1}}^{(u+1)}...,G_{\Gamma_N}^{(N)})\leq b=\dim{\bf V},
  \end{align*}which implies that
  { $${\bf V}= \textsl{Cspan}(G_{\Gamma_1}^{(1)},...,G_{\Gamma_{u-1}}^{(u-1)},G^{(u)}_{\Gamma_u^\prime},G_{\Gamma_{u+1}}^{(u+1)}...,G_{\Gamma_N}^{(N)}). $$}Hence, ${\bf g}^{(u)}_{j}\in {\bf V}$.

 Suppose that there exist $f-1$ disjoint nonempty subsets $X_1,...,X_{f-1}$ of $[N]$ such that\vspace{-0.2cm}
 \begin{align*}
 \forall u\in \bigcup_{i=1}^{f-1}X_i, \forall j\in[\ell], {\bf g}^{(u)}_{j}\in {\bf V}.
 \end{align*}
  Consider the case $f$, note that
  ${\rm rank}(G^{(u)}:u\in\bigcup_{i=1}^{f-1}X_i)\leq \dim{\bf V}=b<T\ell.$
  By the {\rm MDS} property (\ref{eqmds}) of $G$,  then the  $|\bigcup_{i=1}^{f-1}X_i|<T$, which implies that the vectors $\{{\bf g}^{(u)}_{j}:u\in \bigcup_{i=1}^{f-1}X_i, j\in[\ell]\}$ are linearly independent over $\mathbb{F}_q$, so are the vectors $\{{\bf g}^{(u)}_{j}:u\in \bigcup_{i=1}^{f-1}X_i, j\in\Gamma_u\}$. Therefore the vectors can extend to be a base of $\bf V$. Then there exist $\Gamma_i^{(f)}\subseteq\Gamma_i$ for $1\leq i\leq N$ and $\Gamma_u^{(f)}=\Gamma_u$ for $u\in \bigcup_{i=1}^{f-1}X_i$
  such that {\small$\{G_{\Gamma^{(f)}_i}^{(i)}:i\in[N]\}$} is a base of the vector space ${\bf V}$.  Let {\small$X_f=\{i\in[N]: \Gamma^{(f)}_i\neq \Gamma_i\}$}. Then $X_f\neq\emptyset$, otherwise {\small$\dim{\bf V}=\sum^N_{i=1}|\Gamma_i|=T\ell$}. By the definition of $X_f$, one can obtain that {\small$X_f \bigcap (\bigcup_{i=1}^{f-1}X_i)=\emptyset$}, that is, such $f$ subsets $X_i,i\in[f]$ are disjoint. Similarly, by using the same way in the case $f=1$, one can obtain that {\small$\forall u\in X_f,\forall j\in[\ell],{\bf g}^{(u)}_{j}\in {\bf V}$.}
\end{proof}
\begin{remark}\label{rmk2} Using the notations introduced above, we  may assume that for any fixed $(m_1,m_2,...,m_N)$ with $\sum^N_{i=1}m_i=T\ell$,
$\Gamma_i\!=\!\{1,2,...,m_i\},i\!\in\![N]$ in a generator matrix of the $(N,T;\ell)$ MDS array code. This is because that we can rearrange the order of $\ell$ columns in each thick column by multiplying some $\ell\times\ell$ permutation matrix.
\end{remark}

\subsection{Formal Description of the general scheme}

Our scheme can be obtained by modifying the capacity-achieving  $T$-PIR schemes in \cite{Zhang&Xu17:OptimalSubpacketization}.  As in  Example \ref{eg1}, we replace  $M-1$ {\rm MDS} codes with some $M-1$ {\rm MDS} array codes. Next we give these $M-1$ desired {\rm MDS} array codes.

 Specially, for $1\leq k\leq M-1$, the $k$th {\rm MDS} code defined by the generator matrix $G_k$ in \cite{Zhang&Xu17:OptimalSubpacketization} has the parameters $[\frac{N}{T}(T\alpha_k+(N-T)\beta_k),T\alpha_k+(N-T)\beta_k]$ over $\mathbb{F}_q$, where $\alpha_k,\beta_k$ are defined as in the identities $(35),(36)$ in \cite{Zhang&Xu17:OptimalSubpacketization}. Note that
$T\alpha_k+(N-T)\beta_k=T(n-t)^{k-1}t^{M-1-k}$, and  define $\ell_k=(n-t)^{k-1}t^{M-1-k}$,
 where $t=T/{\rm gcd}(N,T),~n=N/{\rm gcd}(N,T)$. Then for $1\leq k\leq M-1$, the $k$th {\rm MDS} code can be viewed as an $(N,T;\ell_k)$ {\rm MDS} array code.  By Lemma \ref{lem3},  one can choose a generator matrix $\tilde{G}_k$ of an $(N,T;\ell_k)$ {\rm MDS} array code which has the recovery property for {\small$(\gamma^{(1)}_k,...,\gamma^{(N)}_k)$,} where $\gamma^{(i)}_k=\alpha_k$ for $1\leq i \leq T$ and $\gamma^{(i)}_k=\beta_k$ for $T+1\leq i \leq N$. Then these $M-1$ matrices $\tilde{G}_k$ are desired.

 One can verify that the new scheme satisfies the correctness condition and $T$-privacy condition, which are guaranteed by the recovery property and MDS property of all $M-1$  {\rm MDS} array codes, respectively.  Moreover, the new scheme doesn't change the sub-packetization of records and download size. Therefore the new scheme has the highest rate and the optimal sub-packetization. Note that there are $M-1$ {\rm MDS} array codes used in our scheme over $\mathbb{F}_q$, by Theorem \ref{thm0} it only needs to requires that for $1\leq k\leq M-1$, $q^{\ell_k}\geq N$. That is, $q\geq \sqrt[\ell]{N}$, where $\ell=\min\{t^{M-2},(n-t)^{M-2}\}$.

\section{Conclusion}\label{sec5}
In this paper we build a general capacity-achieving $T$-PIR scheme  based on  {\rm MDS} array codes over $\mathbb{F}_q$, that is, the queries are generated by using $M-1$ {\rm MDS} array codes rather than {\rm MDS} codes. It requires the field size  $q\geq \sqrt[\ell]{N}$ and has optimal sub-packetization.  In particular, the binary field  is enough to build our scheme as long as $M\geq 2+\lceil\log_\mu\log_2N\rceil$, where $\mu=\min\{t,n-t\}>1$.

\end{document}